\documentclass[letterpaper]{article}

\ifdefined\aaaianonymous
    \usepackage[submission]{aaai2026}
\else
    \usepackage{aaai2026}
\fi

\usepackage{times}
\usepackage{helvet}
\usepackage{courier}
\usepackage[hyphens]{url}
\usepackage{graphicx}
\usepackage{natbib}
\usepackage{caption}
\usepackage{amsmath}
\usepackage{amssymb}
\usepackage{amsthm}

\usepackage{mathtools}
\usepackage{booktabs}
\usepackage{array}
\usepackage{placeins}
\usepackage{float}
\usepackage{multirow}
\usepackage{enumitem}
\newtheorem{theorem}{Theorem}
\newtheorem{proposition}{Proposition}

\title{Same Weights, Different Robot:\\ A Deployment Safety View of VLA Policies}

\author{Jianwei Tai}
\affiliations{School of Internet, Anhui University\\
24012@ahu.edu.cn}

\begin{document}
\maketitle
\emergencystretch=3em
\setlength{\hfuzz}{2pt}

\begin{abstract}
Vision-language-action (VLA) policies are often treated as checkpoint-defined objects: if the weights, prompt, and benchmark suite match, the deployment is assumed to be the same policy.  Robot execution breaks this assumption because the same normalized model output can become a different physical action after action unnormalization and controller conventions are applied.  This creates a deployment-safety gap: safety review can certify the checkpoint while missing the executable robot policy that reaches the controller.  We formalize this gap as an executable policy specification problem: a VLA policy includes the learned model, action representation, metadata-selected unnormalizer, and controller-facing conventions.  Under this view, identical checkpoints can be executable-inequivalent.  For quantile-style action normalization, we derive a closed-form metadata mismatch transform and an ExecSpec certificate that measures action-space semantic drift without model inference or rollout.  On LIBERO-Goal replay, substituting a plausible sibling metadata key yields mean drift 0.199 over six non-gripper action dimensions and reduces success from 28/28 to 2/28 under full substitution.  On LIBERO-Spatial replay, the same substituted key reduces success from 26/26 to 0/26.  The same full-substitution protocol gives 0/28 success for all four Object substitutions and 0/23 or 1/23 success on Long.  Identity-key, replay-validity, no-op filtering, raw-vs-correct replay, mask/gripper, synthetic upper-bound, and OpenVLA-style unnormalizer interface checks rule out several simpler explanations.  These results do not certify closed-loop or hardware safety.  They support a narrower deployment-safety view: action-space metadata is part of the executable policy and should be checked before rollout.
\end{abstract}

\section{Introduction}
\label{sec:introduction}

Vision-language-action (VLA) policies are moving robot control toward a checkpoint-centric deployment model.  A policy is downloaded, paired with a processor, given an instruction, and expected to emit actions in a benchmark or robot environment.  This workflow makes model weights the visible unit of reproducibility and safety review.  A deployment is often cleared for evaluation once the checkpoint hash, prompt template, and benchmark suite match a tested artifact.

For robots, that assumption is a deployment-safety risk, not merely a bookkeeping problem.  The model output is not the physical action.  Between the normalized action representation and the controller lies an executable specification: image preprocessing, action token decoding, action unnormalization, gripper convention, control frequency, and sometimes chunk execution.  These components define the controller-facing semantics of the action space.  When they are omitted from the policy object, the same weights and the same normalized outputs can instantiate a different executable robot policy: the controller receives different physical actions even though the checkpoint appears unchanged.

This paper studies one concrete instance of that safety boundary: action normalization metadata.  In many VLA and imitation-learning pipelines, the model emits normalized action coordinates, and a dataset-specific key selects quantile statistics used to map those coordinates back to physical actions.  Such metadata is easy to treat as a configuration detail.  It is also easy to copy from a sibling dataset, checkpoint, or suite.  The resulting error is a silent interface failure: generated tokens and normalized actions remain unchanged, yet the deployment executes a different robot policy.  This is precisely the kind of mismatch that a pre-rollout safety audit should detect before policy quality or hardware behavior is evaluated.

The point is not simply that an incorrect statistic can cause a bad run.  The paper's object is policy identity: any field that changes the controller-facing action law is part of the executable policy, even when it is stored outside the neural checkpoint.  Under this view, metadata mismatch is a pre-deployment executable non-equivalence, not merely a post-hoc configuration bug.  It can therefore be checked as a specification condition before a model is rolled out.

We argue that action normalization metadata is a safety-relevant component of the executable policy.  This is not a claim that every metadata mismatch is catastrophic, nor that normalization is the only hidden deployment hazard.  The claim is sharper: if a deployment changes the action unnormalizer while holding the model and normalized action sequence fixed, it changes the executable policy.  For quantile normalizers this change is a closed-form affine transform in physical action space.  It can be measured before model inference, simulation rollout, or robot execution.

The safety issue is that this transform can cross task margins.  Manipulation safety is often margin-based: approach direction, contact velocity, rotation, and gripper timing must remain within a local success basin, especially near contact.  A displacement that is small relative to the full action range can still be large relative to a grasp or contact margin.  A safety review that ignores action-space metadata can therefore approve a checkpoint while missing the physical action law that is actually deployed.

We formalize this audit boundary through executable policy equivalence.  A VLA policy is represented as an executable tuple
\begin{equation}
(f_\theta,\mathcal I,\mathcal Z,\mathcal U_k,\mathcal C),
\end{equation}
where \(\mathcal U_k\) is the metadata-selected unnormalizer and \(\mathcal C\) contains controller conventions.  Two deployments are executable-equivalent only if they induce the same physical action law.  This makes checkpoint equality insufficient: identical weights and prompts do not imply identical robot actions.

We then derive an ExecSpec certificate for metadata mismatch.  For quantile-style unnormalization, replacing key \(k_c\) with key \(k_w\) induces a deterministic per-dimension affine displacement
\begin{equation}
\begin{aligned}
\Delta_i(z;k_c,k_w)
&=(q^{(w)}_{01,i}-q^{(c)}_{01,i}) \\
&\quad + z_i\Big[(q^{(w)}_{99,i}-q^{(w)}_{01,i})
        -(q^{(c)}_{99,i}-q^{(c)}_{01,i})\Big]
\end{aligned}
\end{equation}
on masked action dimensions.  The certificate reports mean displacement, tail displacement, axis attribution, mask mismatch, and gripper displacement on a calibration set.  It requires no model inference and serves as a pre-rollout interface screen: a large certificate says the deployment is not the intended executable policy before the robot moves.

Empirically, plausible metadata substitutions are sufficient to flip replayed robot task success.  On LIBERO-Goal, using \textsc{goal-B} as the intended key, substituting \textsc{long-v2} yields mean drift 0.199 over six non-gripper action dimensions, p95 drift 0.275, and 43.5\% of calibration frames above drift 0.20.  On replay-valid demonstrations, success under \textsc{long-v2} drops from 28/28 at the intended metadata to 2/28 at full substitution, with a dose-response curve of 28/28, 17/28, 7/28, 0/28, and 2/28 as interpolation moves from the correct key to the wrong key.  On LIBERO-Spatial, using \textsc{spatial-v2} as the intended key, \textsc{long-v2} yields 0/26 success at full substitution.  On LIBERO-Object, all four plausible sibling substitutions reduce replay success from 28/28 to 0/28 on replay-valid demonstrations.  On LIBERO-Long, \textsc{goal-B} and \textsc{spatial-v2} reduce success from 23/23 to 0/23, while \textsc{mixed-all} reduces it to 1/23.  Identity-key, no-op filtering, raw-vs-correct replay, mask, gripper, synthetic upper-bound, and OpenVLA-style interface checks rule out several simpler explanations.  These replay witnesses support the deployment-safety concern without claiming to certify closed-loop or hardware safety.

The contribution is a deployment-safety view of executable VLA policies:
\begin{enumerate}
    \item We define an executable-policy identity criterion for VLA deployment and show that checkpoint equality is insufficient for physical policy equality.
    \item We derive a closed-form metadata mismatch transform and an exact pre-rollout certificate for one controller-facing semantic field: quantile-style action unnormalization metadata.
    \item We provide LIBERO-Goal, LIBERO-Spatial, LIBERO-Object, and LIBERO-Long replay witnesses that this identity violation is physically consequential, with identity-key, no-op filtering, raw-vs-correct replay, mask, gripper, synthetic upper-bound, and interface-sanity controls ruling out simpler explanations.
\end{enumerate}

The paper is not a general claim that all VLA deployments fail under metadata mismatch.  It identifies a hidden action-space semantics layer that must be specified for a robot policy to be reproducible and auditable.  The broader deployment lesson is that artifacts outside the checkpoint can be load-bearing parts of the executable policy.

\section{Related Work}
\label{sec:related}

\paragraph{Vision-language-action policies.}
VLA models extend vision-language pretraining to robot control by mapping images and language instructions to actions \cite{openvla,rt2,octo,cogact}.  Released checkpoints have become a natural unit of comparison for tasks, objects, and instructions.  Large robot data mixtures further make action-interface assumptions harder to inspect manually \cite{openxembodiment,droid,robonet}.  Evaluation protocols usually specify the benchmark suite, model checkpoint, prompt style, and sometimes image preprocessing.  Recent VLA infrastructure exposes OpenVLA-style and LeRobot-style unnormalization keys or normalization assets as deployment parameters \cite{starvla,cogact,openpi_norm_stats,lerobot}, which reinforces that these fields are operationally load-bearing.  We focus on the formal consequence of selecting the wrong key: the executable physical action law changes, and that change can be certified before rollout.

\paragraph{Robot imitation learning and action normalization.}
Imitation-learning policies commonly normalize continuous actions using dataset statistics, quantiles, or standardization parameters \cite{bc,act,diffusion_policy,robomimic}.  Action representation choices such as absolute, relative, and delta control are deployment-facing conventions rather than training details alone \cite{umi,lerobot}.  Normalization stabilizes training and makes heterogeneous action dimensions easier to model.  In deployed policies, however, the inverse map becomes part of the controller-facing policy.  A normalization key that is harmless during data loading can become safety-critical when it selects the physical action semantics of a checkpoint.  Prior work typically treats normalization as an implementation convention.  We formalize the inverse normalization map as an executable component and show that substituting plausible sibling statistics can flip replayed task success.

\paragraph{Safety and robustness in robot learning.}
Robot-learning safety often studies distribution shift, compounding errors, unsafe exploration, adversarial perception, constraint satisfaction, or recovery from out-of-distribution states \cite{robot_safety_survey,safe_il,adversarial_robotics}.  Those hazards usually involve a changed environment, perturbed observation, policy error, or planner/controller failure.  Metadata mismatch is different: the observation, normalized action sequence, and initial state can all remain fixed.  The failure occurs because the same internal action is decoded into a different physical action.  This places the problem closer to deployment specification than to standard robustness.

\paragraph{Reproducibility and model cards.}
ML reproducibility tools emphasize dataset versions, random seeds, code, checkpoints, and evaluation scripts \cite{model_cards,datasheets,reproducibility}.  Robot-learning simulators and offline-imitation frameworks have also pushed reproducible benchmark setup and dataset handling \cite{robosuite,robomimic}.  Recent VLA benchmarks and harnesses extend this direction for robot policies by standardizing evaluation environments, model servers, and real-world benchmark setups \cite{vla_eval,vla_replica}.  Robot policies need a stronger notion of reproducibility because the executable artifact includes controller semantics.  A model card can identify the correct checkpoint while omitting the unnormalization key or gripper convention that determines the action reaching the robot.  Our executable-policy view complements documentation work by identifying a concrete missing field: action-space metadata must be versioned, hashed, and checked as part of the policy manifest.

\paragraph{Configuration bugs versus executable semantics.}
A natural objection is that wrong metadata is merely a configuration bug.  Configuration errors are usually treated as accidental deviations from a fixed policy.  ExecSpec makes the opposite point: if a field changes the controller-facing action law, then it is part of the policy identity.  The operational distinction is that the mismatch can be detected before failure occurs: changing the metadata key changes executable equivalence, yields a closed-form action transform, and produces a static pre-rollout certificate.  The empirical result then tests whether this formally specified transform is physically consequential.  The answer is yes across four LIBERO suite families under replay.  The contribution is therefore not that configuration errors exist, but that controller-facing action metadata defines policy identity and can be checked before deployment.

\section{Executable Policy Specification}
\label{sec:execspec}

A deployed robot policy is often identified with a checkpoint.  For vision-language-action (VLA) models this identification is incomplete.  The checkpoint maps an observation and a language instruction into an internal action representation, but the robot executes a physical control vector only after a sequence of preprocessing, decoding, unnormalization, and control-convention choices.  A safety audit that checks only the model weights can therefore certify the wrong object: two deployments can share identical weights and prompts while issuing different physical actions.

We formalize this distinction by treating the executable policy as the object that reaches the robot controller, not merely the neural network that emits normalized action coordinates.  Let an observation history be denoted by
\(
  h_t=(o_{\le t}, x_{\le t})
\),
where \(o_t\) includes the image and language-conditioned observation and \(x_t\) denotes any proprioceptive or controller state exposed to the policy.  An executable VLA policy is a tuple
\begin{equation}
\Pi_{\mathrm{exec}}
  = (f_\theta,\mathcal I,\mathcal Z,\mathcal U_k,\mathcal C).
\end{equation}
Here \(f_\theta\) is the learned model, \(\mathcal I\) is the image and language processor, \(\mathcal Z\) is the normalized action representation emitted or decoded by the model, \(\mathcal U_k\) is the action unnormalizer selected by metadata key \(k\), and \(\mathcal C\) collects controller conventions such as gripper semantics, action dimensionality, control rate, and chunk execution.  The physical action sent to the environment is
\begin{equation}
  a_t = \mathcal C\!\left(\mathcal U_k(z_t)\right),
  \qquad
  z_t \sim f_\theta(\mathcal I(h_t)).
\end{equation}
The notation allows deterministic policies by replacing the sampling relation with \(z_t=f_\theta(\mathcal I(h_t))\).

\paragraph{Executable equivalence.}
Two deployments are executable-equivalent if they induce the same physical action law under every admissible history.  Formally,
\begin{equation}
\begin{aligned}
\Pi_1 \equiv_{\mathrm{exec}} \Pi_2
\quad\Longleftrightarrow\quad
&\mathcal L_{\Pi_1}(a_t\mid h_t) \\
&=\mathcal L_{\Pi_2}(a_t\mid h_t)
\quad \text{for all } h_t,t.
\end{aligned}
\label{eq:exec-equivalence}
\end{equation}
where \(\mathcal L_\Pi(a_t\mid h_t)\) denotes the distribution of physical actions issued by executable policy \(\Pi\).  This definition is intentionally stricter than checkpoint equality.  It is the physical action law, not the token distribution or normalized action vector, that determines whether a deployment can collide, miss a grasp, close a gripper at the wrong time, or fail to recover from a near-miss.

The safety-relevant failure mode studied in this paper is executable non-equivalence under identical model outputs.  Consider two deployments
\begin{equation}
\begin{aligned}
\Pi_c&=(f_\theta,\mathcal I,\mathcal Z,\mathcal U_{k_c},\mathcal C),\\
\Pi_w&=(f_\theta,\mathcal I,\mathcal Z,\mathcal U_{k_w},\mathcal C),
\end{aligned}
\end{equation}
which differ only in the metadata key used to select the unnormalizer.  The neural model, prompt, image processor, controller convention, initial state, and normalized action sequence are all held fixed.  If \(\mathcal U_{k_c}\ne \mathcal U_{k_w}\), then the two deployments can be non-equivalent even though their checkpoints and normalized outputs are identical.  This is a silent deployment hazard: the mismatch is not visible from generated tokens, normalized actions, or checkpoint hashes, but it changes the action that reaches the robot.

\begin{proposition}[Checkpoint equality is not executable equality]
\label{prop:checkpoint-not-exec}
There exist two executable VLA policies \(\Pi_c\) and \(\Pi_w\) with identical \(f_\theta\), \(\mathcal I\), \(\mathcal Z\), and \(\mathcal C\), and identical normalized action outputs for every history, such that \(\Pi_c\not\equiv_{\mathrm{exec}}\Pi_w\).
\end{proposition}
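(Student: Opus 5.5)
The plan is an explicit construction, since the statement is existential. I will fix a deterministic model \(f_\theta\) (for instance, one that ignores its input and always emits the same normalized vector \(z^\star\in\mathcal Z=[-1,1]^d\)). I will take any processor \(\mathcal I\) and let the controller convention \(\mathcal C\) be the identity map on \(\mathbb R^d\). Then the normalized action outputs of \(\Pi_c\) and \(\Pi_w\) coincide for every history by construction, because both policies share \(f_\theta\) and \(\mathcal I\). The only freedom left is the choice of the two unnormalizers \(\mathcal U_{k_c}\) and \(\mathcal U_{k_w}\).

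For these I will use the quantile-style unnormalizer from the introduction:
\begin{equation*}
\mathcal U_k(z)_i = q^{(k)}_{01,i} + \tfrac{1}{2}(z_i+1)\bigl(q^{(k)}_{99,i}-q^{(k)}_{01,i}\bigr).
\end{equation*}
I will take two keys whose statistics differ only by an offset in the first coordinate, say \(q^{(w)}_{01,1}=q^{(c)}_{01,1}+\delta\) and \(q^{(w)}_{99,1}=q^{(c)}_{99,1}+\delta\) with \(\delta\neq 0\), and identical statistics elsewhere. The range term then cancels, so the displacement is \(\Delta_1(z;k_c,k_w)=\delta\) for every \(z\). This holds regardless of the exact affine parametrization convention, which is why I choose a pure shift.

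It remains to verify non-equivalence against the definition in \eqref{eq:exec-equivalence}. For any history \(h_t\), the action law of \(\Pi_c\) is the point mass at \(\mathcal U_{k_c}(z^\star)\), and that of \(\Pi_w\) is the point mass at \(\mathcal U_{k_w}(z^\star)=\mathcal U_{k_c}(z^\star)+\delta e_1\). These Dirac measures differ, so the laws already disagree at a single history. Hence \(\Pi_c\not\equiv_{\mathrm{exec}}\Pi_w\).

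The only point needing care is the stochastic or nontrivial \(\mathcal C\) case. If one prefers a general \(f_\theta\), the law of \(a_t\) under \(\Pi_w\) is the pushforward of the law under \(\Pi_c\) by the translation \(a\mapsto a+\delta e_1\). A probability measure on \(\mathbb R^d\) cannot be invariant under a nonzero translation, because its first-coordinate mean, or its characteristic function, would have to satisfy \(\phi(s)=e^{is\delta}\phi(s)\), which forces \(\phi\equiv 0\) near \(s=0\) and contradicts \(\phi(0)=1\). For \(\mathcal C\), the main obstacle is ensuring that it does not collapse the difference, for example through clipping. Choosing \(\mathcal C\) to be the identity, or any injective map, avoids this, and that is the choice I will commit to.
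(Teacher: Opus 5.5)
Your proof is correct. Its skeleton matches the paper's: fix \(f_\theta,\mathcal I,\mathcal Z,\mathcal C\), change only the unnormalizer, exhibit a history at which the controller-facing action law changes, and conclude from Eq.~\eqref{eq:exec-equivalence}. The difference is in what each version actually establishes. The paper argues conditionally: \emph{if} \(\mathcal C(\mathcal U_{k_c}(z_t))\neq\mathcal C(\mathcal U_{k_w}(z_t))\) at an emitted \(z_t\), the laws differ. It then asserts that such a pair exists whenever the keys induce distinct maps on the support, without writing one down. Its step from a disagreement at one emitted \(z_t\) to a disagreement of laws is also immediate only for a deterministic model. For stochastic \(f_\theta\), pointwise disagreement need not separate the pushforward laws: take \(z\) uniform on \([-1,1]\), \(\mathcal U_{k_c}(z)=z\), \(\mathcal U_{k_w}(z)=-z\), \(\mathcal C=\mathrm{id}\). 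Your pure-shift construction settles both points. The witness is explicit, and the displacement \(\delta e_1\) is independent of \(z\) and of the affine convention. The translation-invariance argument then gives non-equivalence for every \(f_\theta\), deterministic or stochastic, so the constant model is not even needed. What the paper's phrasing buys instead is a general sufficient condition, namely disagreement of \(\mathcal C\circ\mathcal U_k\) somewhere on the support. It reuses that condition in Proposition~\ref{prop:metadata-non-equivalence} for realistic metadata substitutions, where a single degenerate witness would not serve. Two small fixes. First, the formula you attribute to the introduction is not the paper's convention: Eq.~\eqref{eq:quantile-unnorm} is \(q^{(k)}_{01,i}+z_i\Delta q^{(k)}_i\), with no \(\tfrac12(z_i+1)\) rescaling. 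As you anticipated, your pure shift makes this harmless. Second, the first-coordinate-mean argument requires integrability, so keep the characteristic-function version. It yields \(\phi=0\) on a punctured neighborhood of \(0\), and hence \(\phi(0)=0\) by continuity, which gives the contradiction.
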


\begin{proof}
Choose any history \(h_t\) for which the model emits a normalized action \(z_t\).  Let the two deployments differ only by unnormalizers \(\mathcal U_{k_c}\) and \(\mathcal U_{k_w}\).  If \(\mathcal C(\mathcal U_{k_c}(z_t))\ne \mathcal C(\mathcal U_{k_w}(z_t))\), then the conditional physical action laws differ at \(h_t\).  By Eq.~\eqref{eq:exec-equivalence}, the deployments are not executable-equivalent, despite sharing the same checkpoint and normalized output.  Such a pair exists whenever the metadata keys induce distinct physical unnormalization maps on any normalized action in the support of \(f_\theta\).
\end{proof}

This proposition is simple, but it changes the audit boundary.  A deployment manifest that records only model weights, tokenizer, and prompt template is insufficient to specify the policy that the robot executes.  The action-space semantics must be versioned and checked as part of the policy object.  The next section instantiates this principle for the quantile-style action normalizers used by common VLA and imitation-learning pipelines, where metadata mismatch yields a closed-form affine transform in physical action space.

\section{Metadata Mismatch Certificates}
\label{sec:certificate}

We derive a certificate for the executable non-equivalence induced by action-normalization metadata.  The certificate is model-independent: it does not require running a VLA, sampling an action token, or rolling out a simulator.  It measures a narrower safety boundary.  When a deployment uses metadata key \(k_w\) where key \(k_c\) was intended, the certificate quantifies how far the physical action semantics move on a calibration set of normalized actions.

\subsection{Quantile unnormalization as an executable map}

Many robot imitation and VLA pipelines train the model to emit normalized action coordinates.  A metadata key \(k\) specifies per-dimension lower and upper quantiles \(q^{(k)}_{01},q^{(k)}_{99}\in\mathbb R^d\) and a mask \(m^{(k)}\in\{0,1\}^d\) indicating which dimensions are normalized.  Write \(\Delta q^{(k)}_i=q^{(k)}_{99,i}-q^{(k)}_{01,i}\).  For a normalized action \(z\in\mathbb R^d\), the unnormalizer is
\begin{equation}
\mathcal U_k(z)_i=
\begin{cases}
q^{(k)}_{01,i}+z_i\Delta q^{(k)}_i, & m^{(k)}_i=1,\\
z_i, & m^{(k)}_i=0.
\end{cases}
\label{eq:quantile-unnorm}
\end{equation}
The unmasked case is common for gripper dimensions that are already represented in a controller convention.  The analysis below allows mask mismatch, but our main empirical substitutions have identical masks and zero gripper-dimension drift; the failures are driven by non-gripper action dimensions.

Let \(k_c\) be the intended metadata key and \(k_w\) a substituted key.  The physical action displacement before controller post-processing is
\begin{equation}
\Delta(z;k_c,k_w)=\mathcal U_{k_w}(z)-\mathcal U_{k_c}(z).
\end{equation}
For a shared normalized action \(z\), this displacement is the entire effect of the metadata substitution.  All model-side behavior is held fixed.

\begin{theorem}[Closed-form metadata displacement]
\label{thm:metadata-displacement}
Assume two metadata keys \(k_c,k_w\) use the same mask \(m\).  For every masked dimension \(i\) with \(m_i=1\), substituting \(k_w\) for \(k_c\) changes the executed pre-controller action by
\begin{equation}
\begin{aligned}
\Delta_i(z;k_c,k_w)
&=(q^{(w)}_{01,i}-q^{(c)}_{01,i}) \\
&\quad + z_i(\Delta q^{(w)}_i-\Delta q^{(c)}_i).
\end{aligned}
\label{eq:closed-form-delta}
\end{equation}
For every unmasked dimension \(i\) with \(m_i=0\), \(\Delta_i(z;k_c,k_w)=0\).  Therefore the metadata substitution induces a deterministic per-dimension affine transform of the normalized action, independent of the model that produced \(z\).
\end{theorem}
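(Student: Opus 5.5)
The plan is a direct computation from the definition of the unnormalizer in Eq.~\eqref{eq:quantile-unnorm}, split into the two cases determined by the shared mask \(m\). Since \(\Delta(z;k_c,k_w)=\mathcal U_{k_w}(z)-\mathcal U_{k_c}(z)\) is defined coordinatewise, it is enough to compute \(\Delta_i\) one dimension at a time. Because both keys use the same mask, the same branch of Eq.~\eqref{eq:quantile-unnorm} applies to both unnormalizers in each coordinate. This is the only point where the mask hypothesis is used, and it is what makes the case split clean.

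For a masked dimension (\(m_i=1\)), substitute the first branch for both keys to get \(\Delta_i=(q^{(w)}_{01,i}+z_i\Delta q^{(w)}_i)-(q^{(c)}_{01,i}+z_i\Delta q^{(c)}_i)\). Regrouping the constant terms and the terms proportional to \(z_i\) yields Eq.~\eqref{eq:closed-form-delta}. Expanding \(\Delta q^{(k)}_i=q^{(k)}_{99,i}-q^{(k)}_{01,i}\) then recovers the form stated in the introduction. For an unmasked dimension (\(m_i=0\)), both unnormalizers return \(z_i\), so \(\Delta_i=0\).

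For the concluding claim, note that in each coordinate \(\Delta_i\) has the form \(\alpha_i+\beta_i z_i\). The coefficients are \(\alpha_i=m_i(q^{(w)}_{01,i}-q^{(c)}_{01,i})\) and \(\beta_i=m_i(\Delta q^{(w)}_i-\Delta q^{(c)}_i)\), and they depend only on the metadata of \(k_c,k_w\). So \(\Delta(\cdot;k_c,k_w)\) is the affine map \(z\mapsto \alpha+\operatorname{diag}(\beta)z\), which is deterministic and contains no \(f_\theta\), \(\mathcal I\), or history. Composing, \(\mathcal U_{k_w}(z)=\mathcal U_{k_c}(z)+\alpha+\operatorname{diag}(\beta)z\) holds for every \(z\), and in particular for any \(z_t\) drawn from \(f_\theta(\mathcal I(h_t))\). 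This gives the model-independence claim, and it is consistent with the setting of Proposition~\ref{prop:checkpoint-not-exec}.

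There is no genuine obstacle here. The only care needed is in stating exactly what ``affine transform of the normalized action'' means. I would phrase it as: the pre-controller displacement is an affine function of \(z\) with diagonal linear part. I would also remark that the theorem says nothing about the effect after \(\mathcal C\) is applied, since nonlinear controller conventions could alter it there. Finally, I would note that the equal-mask assumption is essential. Without it, a mismatched coordinate gives \(\Delta_i=\pm\bigl(q_{01,i}+z_i(\Delta q_i-1)\bigr)\) for the key that normalizes that dimension. That is still affine, but it is no longer of the stated form.
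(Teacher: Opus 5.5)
Your proposal is correct and follows the paper's proof: expand Eq.~\eqref{eq:quantile-unnorm} under both keys coordinatewise, use the shared mask so that both keys fall in the same branch, and regroup the masked case into Eq.~\eqref{eq:closed-form-delta}. The unmasked coordinates cancel to zero, and model-independence follows because only the metadata and \(z\) appear. Your extra remarks, the diagonal-affine reformulation and the sign-correct mask-mismatch formula, are accurate additions that the paper's argument does not need.
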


\begin{proof}
For masked dimensions, expand Eq.~\eqref{eq:quantile-unnorm} under each key:
\begin{align}
\mathcal U_{k_w}(z)_i-
\mathcal U_{k_c}(z)_i
&=q^{(w)}_{01,i}+z_i\Delta q^{(w)}_i
 -q^{(c)}_{01,i}-z_i\Delta q^{(c)}_i \\
&=(q^{(w)}_{01,i}-q^{(c)}_{01,i})
+z_i(\Delta q^{(w)}_i-\Delta q^{(c)}_i).
\end{align}
For unmasked dimensions, both keys return \(z_i\) when the mask is shared, so the displacement is zero.  The expression depends only on metadata and \(z\), not on \(f_\theta\).
\end{proof}

\begin{proposition}[Metadata mismatch implies executable non-equivalence on support]
\label{prop:metadata-non-equivalence}
Consider two deployments that share \(f_\theta\), \(\mathcal I\), \(\mathcal Z\), and controller convention \(\mathcal C\), but use metadata keys \(k_c\) and \(k_w\).  If there exists a normalized action \(z\) in the support of the deployment such that
\begin{equation}
\mathcal C(\mathcal U_{k_c}(z))\ne \mathcal C(\mathcal U_{k_w}(z)),
\label{eq:controller-non-equivalence}
\end{equation}
then the two deployments are not executable-equivalent.  In particular, when \(\mathcal C\) is injective on the affected action coordinates, any nonzero metadata displacement on support is sufficient for executable non-equivalence.
\end{proposition}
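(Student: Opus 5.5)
The plan is to reduce the statement directly to the definition of executable equivalence in Eq.~\eqref{eq:exec-equivalence}, following the same pattern as the proof of Proposition~\ref{prop:checkpoint-not-exec}. Since \(z\) lies in the support of the deployment, I will first fix a history \(h_t\) and time \(t\) at which \(z\) is attainable. In the deterministic case this means \(z=f_\theta(\mathcal I(h_t))\). In the stochastic case it means \(z\) lies in the support of the law of \(f_\theta(\mathcal I(h_t))\), which is the same for both deployments because \(f_\theta\) and \(\mathcal I\) are shared. The key structural observation is that the two physical action laws at \(h_t\) are pushforwards of one common normalized-action law \(\nu_{h_t}\). Under \(\Pi_c\) the law is the pushforward by \(g_c=\mathcal C\circ\mathcal U_{k_c}\), and under \(\Pi_w\) it is the pushforward by \(g_w=\mathcal C\circ\mathcal U_{k_w}\).

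For the deterministic case, \(\nu_{h_t}=\delta_z\). The two laws are then the Dirac masses \(\delta_{g_c(z)}\) and \(\delta_{g_w(z)}\), which differ by Eq.~\eqref{eq:controller-non-equivalence}. So equivalence fails at \(h_t\), and hence fails overall.

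The stochastic case is the step I expect to be the main obstacle. Two distinct maps that differ at a single support point \(z\) can still push a distribution forward to the same law. One reason is that the point \(z\) may carry zero mass. Another is that \(g_w\) might permute mass among support points. I would first make the support assumption precise: \(z\) should be an atom, or the set \(\{z' : g_c(z')\ne g_w(z')\}\) should have positive \(\nu_{h_t}\)-measure. I would then use the specific structure of the maps. By Theorem~\ref{thm:metadata-displacement}, with shared masks, \(\mathcal U_{k_w}\) is \(\mathcal U_{k_c}\) composed with a per-dimension affine map \(T\), which is invertible whenever \(\Delta q^{(c)}_i,\Delta q^{(w)}_i\neq 0\). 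For the injective-controller clause, I would then argue that the laws coincide only if \(\nu_{h_t}\) is invariant under the affine map \(S=\mathcal U_{k_c}^{-1}\circ\mathcal U_{k_w}\). A non-identity affine map on a line fixes at most one point, and it has no finite invariant probability measures other than those concentrated on that fixed point or on periodic structure, which an affine map of \(\mathbb R\) lacks unless it is a reflection. Choosing the support hypothesis so that it excludes this degenerate case (for instance, \(z\) is an atom not fixed by \(S\)) closes the argument. If this level of generality becomes awkward, I would state the deterministic case as the main proof and handle the stochastic case under the explicit atom assumption.

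For the final clause, suppose \(\mathcal C\) is injective on the affected coordinates and the metadata displacement \(\Delta(z;k_c,k_w)\) from Theorem~\ref{thm:metadata-displacement} is nonzero at some support point \(z\). Then \(\mathcal U_{k_c}(z)\ne\mathcal U_{k_w}(z)\), and injectivity gives Eq.~\eqref{eq:controller-non-equivalence}. The first part of the proof then applies.
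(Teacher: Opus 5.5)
Your deterministic argument is the paper's proof. The paper fixes the history that produces $z$, notes that the two deployments send different controller-facing actions there, concludes that the laws in Eq.~\eqref{eq:exec-equivalence} differ, and obtains the injective clause from $\mathcal U_{k_c}(z)\ne\mathcal U_{k_w}(z)$. Beyond that, you correctly see that the step ``different actions at that history, hence different laws'' holds in general only when the history produces $z$ with probability one, even though the executable tuple allows $z_t\sim f_\theta(\mathcal I(h_t))$. The paper does not address this, so your pushforward/invariance argument is a genuine strengthening.

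You need less than an atom. Quantile ranges are positive, so $S=\mathcal U_{k_c}^{-1}\circ\mathcal U_{k_w}$ is diagonal affine with positive slopes $\Delta q^{(w)}_i/\Delta q^{(c)}_i$, and reflections never arise. Argue on one-dimensional marginals. A pure translation admits no invariant probability. For slope $\ne 1$, iterating $S$ or $S^{-1}$ contracts everything to the fixed point. Hence every $S$-invariant probability is concentrated on $\mathrm{Fix}(S)=\{z:\Delta(z;k_c,k_w)=0\}$. This marginal step is also what extends your ``on a line'' remark to $\mathbb R^d$. Since $\Delta$ is continuous, a point of the topological support with $\Delta(z)\ne 0$ has a neighborhood of positive mass on which $\Delta\ne 0$. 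So $S_{\#}\nu\ne\nu$, and for a measurable injective $\mathcal C$ the physical laws differ. The injective clause therefore holds for stochastic policies under the ordinary notion of support.

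Restrict your stochastic fallback to the injective clause. For the first, general-$\mathcal C$ clause, no hypothesis on $z$ alone rescues the stochastic case, atoms included. Here is a counterexample:
\begin{itemize}
\item $\mathcal C$ wraps a rotation coordinate modulo $2\pi$;
\item the substitution is a pure shift by $\pi$ (equal ranges, $q_{01}$ offset by $\pi$);
\item $\nu$ has two equal atoms that $\mathcal U_{k_c}$ sends to $0$ and $\pi$.
\end{itemize}
Then $\mathcal C(\mathcal U_{k_c}(z))\ne\mathcal C(\mathcal U_{k_w}(z))$ at an atom, yet both physical laws equal $\tfrac12\delta_0+\tfrac12\delta_\pi$. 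That clause holds only in the deterministic reading, which is what both the paper's proof and your main line establish.
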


\begin{proof}
The two deployments assign different controller-facing actions to the same history that produces \(z\).  Their conditional physical action laws therefore differ at that history, which violates Eq.~\eqref{eq:exec-equivalence}.  If \(\mathcal C\) is injective on the affected coordinates, \(\mathcal U_{k_c}(z)\ne\mathcal U_{k_w}(z)\) implies Eq.~\eqref{eq:controller-non-equivalence}.
\end{proof}

Theorem~\ref{thm:metadata-displacement} is not a performance bound.  It is a specification result: once \(z\) and the two metadata records are known, the action-space semantic shift is known exactly.  Proposition~\ref{prop:metadata-non-equivalence} states the policy-identity consequence.  The dangerous transformation is not inferred post hoc from failures; it is present before the robot moves.

\subsection{Certificates on calibration actions}

Let \(Z=\{z_j\}_{j=1}^n\) be a calibration set of normalized actions.  In our experiments, \(Z\) is obtained by normalizing demonstration actions with the intended key and then comparing alternative decoders.  This isolates action-space semantics from policy quality: the same normalized trajectory is decoded under different executable specifications.

We define the mean displacement certificate
\begin{equation}
\Gamma(k_c,k_w;Z)
=\frac{1}{|Z|}\sum_{z\in Z}
\left\|\mathcal U_{k_w}(z)-\mathcal U_{k_c}(z)\right\|_2,
\label{eq:gamma}
\end{equation}
and the tail certificate
\begin{equation}
\Gamma_\tau(k_c,k_w;Z)
=\frac{1}{|Z|}\sum_{z\in Z}
\mathbf 1\left\{
\left\|\mathcal U_{k_w}(z)-\mathcal U_{k_c}(z)\right\|_2\ge \tau
\right\}.
\label{eq:gamma-tail}
\end{equation}
The threshold \(\tau\) is not treated as a universal safety margin.  It is a reporting scale for the action space.  Physical replay and dose-response experiments test whether the certified displacement is sufficient to change task outcomes.  We report \(F_{.2}\) in the main tables because it lies in the observed drift range for plausible keys and separates the strongest substitutions without serving as a deployment decision threshold.

We also use an interpolation family to probe causal dose response:
\begin{equation}
q_{01}^{(\alpha)}=(1-\alpha)q_{01}^{(c)}+\alpha q_{01}^{(w)},
\qquad
q_{99}^{(\alpha)}=(1-\alpha)q_{99}^{(c)}+\alpha q_{99}^{(w)},
\label{eq:alpha-interp}
\end{equation}
with \(\alpha\in[0,1]\).  At \(\alpha=0\), the executable specification is the intended one.  At \(\alpha=1\), it is the substituted metadata.  Intermediate values keep the normalized action sequence fixed while moving only the action-space semantics.  This is a controlled intervention on the executable policy, not a retraining or attack procedure.

\subsection{ExecSpec verifier}

The certificate suggests a minimal verifier.  Given an intended metadata record, a candidate deployment metadata record, and a calibration action set, the verifier computes Eq.~\eqref{eq:gamma}, Eq.~\eqref{eq:gamma-tail}, axis-wise displacement, mask mismatch, and gripper-dimension displacement.  The verifier flags large displacement before any model inference or robot rollout.

\paragraph{ExecSpec metadata verifier.}
Given intended metadata \(k_c\), candidate metadata \(k_w\), calibration physical actions \(A\), and reporting thresholds \(T\), the verifier performs five steps: (i) normalize \(A\) under \(k_c\) to obtain \(Z\); (ii) decode \(Z\) under both \(k_c\) and \(k_w\); (iii) compute per-sample displacement \(d(z)=\|\mathcal U_{k_w}(z)-\mathcal U_{k_c}(z)\|_2\); (iv) report \(\Gamma=\mathrm{mean}(d)\), quantiles of \(d\), and \(\Gamma_\tau\) for \(\tau\in T\); and (v) report mask mismatch, gripper displacement, and dominant action dimensions.  The output is an executable-specification certificate.

The verifier checks a boundary condition that is often absent from VLA model cards: whether the action-space semantics used at deployment match the semantics assumed by the checkpoint.  A high certificate does not prove that every rollout will fail, because task success also depends on state, dynamics, and recovery.  Conversely, a low certificate does not certify full deployment safety.  The certificate measures a necessary piece of the executable specification: how much the decoded action trajectory changes before the controller sees it.

\subsection{Connection to safety margins}

The certificate becomes safety-relevant when the metadata-induced displacement crosses a task's local action margin.  Let \(a_t^\star=\mathcal U_{k_c}(z_t)\) be an action that remains in the task-success basin at state \(s_t\), and let \(\mathcal A_{\mathrm{fail}}(s_t)\) denote actions that move the system into an unrecoverable or task-failing region.  Define the local margin
\begin{equation}
  m(s_t,a_t^\star)=
  \inf_{a\in\mathcal A_{\mathrm{fail}}(s_t)}\|a-a_t^\star\|_2.
\end{equation}
If \(\|\Delta(z_t;k_c,k_w)\|_2>m(s_t,a_t^\star)\), then the metadata substitution can move a successful normalized action outside the local success basin.  We do not assume access to \(m\) in deployment.  Instead, replay experiments provide witnesses: the same normalized action sequence is successful under \(k_c\) and fails under \(k_w\).  Such a witness proves that action-space metadata alone is sufficient to flip the physical outcome for that trajectory.

This safety framing is deliberately narrower than adversarial robustness.  No adversary changes the image, prompt, weights, or normalized action sequence.  The hazard arises because the executable policy is underspecified.  The verifier closes that gap by making action metadata an auditable component of the policy rather than an implicit configuration detail.

\section{Experiments}
\label{sec:experiments}

The experiments test whether the executable-specification hazard in Sec.~\ref{sec:certificate} is only a formal possibility or flips replayed physical outcomes under plausible metadata substitutions.  The intervention is deliberately narrow: the normalized action sequence, initial state, task, and simulator are fixed.  Only the metadata used to decode normalized actions is changed.  This isolates action-space semantics from model quality, prompt choice, visual perturbations, and adversarial optimization.

\subsection{Protocol}

For each LIBERO suite \cite{liu2023libero}, demonstrations are converted into normalized trajectories using the intended metadata key.  The same normalized trajectory is then decoded under either the intended key or a substituted key and replayed from the same recorded initial state.  A VLA-emitted normalized action would pass through the same metadata-selected unnormalizer; demonstration-derived trajectories are used to remove model-quality, sampling, perception, and prompt confounds from the executable-semantics test.  The experiment therefore asks a counterfactual identity question: if the internal action sequence is fixed, can changing only the metadata-selected executable map change the physical outcome?

A demonstration is counted as replay-valid if the intended, correct-decoded trajectory succeeds.  All replay success rates below are reported on this replay-valid set.  This choice avoids attributing failures of simulator replay itself to metadata mismatch.  As a sanity check, raw HDF5 replay succeeds on 27 of 30 LIBERO-Goal demonstrations, while correct-decoded replay succeeds on 28 of 30, with one mismatch; the paper therefore reports metadata substitutions on correct-decoded replay-valid demonstrations.  Section~\ref{sec:extended-evidence} spells out the replay-valid set construction, interpolation grid, calibration frame counts, and threshold convention.

The main plausible substitutions use sibling metadata records from the same VLA/LIBERO workflow: \textsc{spatial-v2}, \textsc{long-v2}, \textsc{mixed-all}, and \textsc{goal-B}.  These are not random corruptions.  They are keys that could plausibly be selected when a checkpoint, suite name, or dataset-statistics file is copied across related experiments.  We choose substituted keys by this workflow-level criterion, not by replay failure.  Synthetic permutations and shift-scale statistics are kept as upper-bound controls rather than as the deployment claim.

Calibration frames are sampled from the replay-valid demonstrations after applying the same frame filter used for replay.  LIBERO-Goal contributes 797 calibration frames and LIBERO-Spatial contributes 756.  Replay success uses the LIBERO task success predicate under the recorded initial state.

\subsection{Pre-rollout metadata certificates}

Table~\ref{tab:certificate-summary} reports the static ExecSpec certificates before replay.  On LIBERO-Goal, the intended key is \textsc{goal-B}.  All three plausible substitutions induce nonzero action-space displacement.  The strongest plausible substitution, \textsc{long-v2}, has mean drift 0.199 over six non-gripper action dimensions, p95 drift 0.275, p99 drift 0.318, and 43.5\% of calibration frames above drift 0.20.  On LIBERO-Spatial, the same substituted key yields mean drift 0.172 and 31.2\% of calibration frames above drift 0.20.  The certificate is computed without model inference and before replay.  Section~\ref{sec:expanded-certificates} decomposes these summaries into per-suite quantiles and per-dimension drift tables.

\begin{table*}[t]
\centering
\scriptsize
\setlength{\tabcolsep}{3pt}
\caption{Pre-rollout ExecSpec certificates on calibration frames.  Drift is over six normalized non-gripper dimensions.  \(F_{.2}\) is a reporting fraction at drift 0.20, not a deployment threshold.  Axes report top dimension:mean-absolute-drift; all rows have matching masks and zero gripper-dimension drift.}
\label{tab:certificate-summary}
\begin{tabular}{llrrrrrl}
\toprule
Suite & Key & Cal. $n$ & Mean & p95 & p99 & $F_{.2}$ & Top axes \\
\midrule
Goal & \textsc{spatial-v2} & 797 & 0.152 & 0.233 & 0.252 & 0.089 & $5$:.107,$0$:.067,$4$:.060 \\
Goal & \textsc{long-v2}   & 797 & 0.199 & 0.275 & 0.318 & 0.435 & $2$:.110,$5$:.110,$0$:.084 \\
Goal & \textsc{mixed-all} & 797 & 0.132 & 0.171 & 0.198 & 0.003 & $0$:.085,$5$:.076,$1$:.044 \\
Spatial & \textsc{goal-B}   & 756 & 0.178 & 0.223 & 0.263 & 0.135 & $5$:.139,$0$:.063,$4$:.063 \\
Spatial & \textsc{long-v2}  & 756 & 0.172 & 0.241 & 0.276 & 0.312 & $2$:.111,$1$:.070,$0$:.062 \\
Spatial & \textsc{mixed-all}& 756 & 0.113 & 0.160 & 0.209 & 0.017 & $5$:.074,$1$:.067,$4$:.028 \\
\bottomrule
\end{tabular}
\end{table*}

The magnitude is deployment-relevant because robot manipulation policies often operate close to contact and grasp margins.  A drift of this scale is not a formatting difference in a configuration file; it is a different physical action sent to the controller.  The replay experiments below test whether this certified displacement is sufficient to flip replayed task success.

\subsection{Full-substitution replay witnesses}

Table~\ref{tab:replay-witnesses} separates the demonstration-level replay outcome from the frame-level certificate.  Replay drift columns summarize frames from replay-valid intervention trajectories at full substitution \((\alpha=1)\), while success is measured by the LIBERO task predicate on replay-valid demonstrations.  We report counts and Wilson 95\% confidence intervals because the replay-valid sets are finite; the claim is witness-based rather than a deployment-prevalence estimate.

\begin{table*}[t]
\centering
\scriptsize
\setlength{\tabcolsep}{4pt}
\caption{Full-substitution replay witnesses.  Normalized trajectories and initial states are fixed; only the metadata-selected unnormalizer changes.  Wilson intervals are for the demonstration-level success rate.}
\label{tab:replay-witnesses}
\begin{tabular}{llrrrrl}
\toprule
Suite & Key & Replay $n$ & Drift mean & $F_{.2}$ & Success & Rate [95\% CI] \\
\midrule
Goal & \textsc{spatial-v2} & 28 & 0.151 & 0.083 & 8/28 & 0.286 [0.153, 0.471] \\
Goal & \textsc{long-v2}   & 28 & 0.202 & 0.472 & 2/28 & 0.071 [0.020, 0.226] \\
Goal & \textsc{mixed-all} & 28 & 0.132 & 0.004 & 5/28 & 0.179 [0.079, 0.356] \\
Spatial & \textsc{goal-B}   & 26 & 0.179 & 0.146 & 4/26 & 0.154 [0.062, 0.335] \\
Spatial & \textsc{long-v2}  & 26 & 0.173 & 0.345 & 0/26 & 0.000 [0.000, 0.129] \\
Spatial & \textsc{mixed-all}& 26 & 0.114 & 0.016 & 3/26 & 0.115 [0.040, 0.290] \\
\bottomrule
\end{tabular}
\end{table*}

\subsection{Object and Long replay witnesses}

We next test whether the endpoint replay witness extends beyond Goal and Spatial.  LIBERO-Object uses object-suite action statistics computed from Object actions as the intended key; because no separate released ECoT Object checkpoint statistics are used here, Object is treated as a suite-level executable-specification test rather than a checkpoint-specific claim.  Correct-decoded replay is valid for 28 of 30 demonstrations.  Under all four plausible sibling substitutions, replay success drops from 28/28 to 0/28.  LIBERO-Long uses \textsc{long-v2} as the intended key; correct-decoded replay is valid for 23 of 30 demonstrations.  Under full substitution, \textsc{goal-B} and \textsc{spatial-v2} both reduce success from 23/23 to 0/23, while \textsc{mixed-all} reduces success to 1/23.  Table~\ref{tab:object-long-replay} reports the full-substitution results.  These additional witnesses strengthen the executable-non-equivalence claim across all four LIBERO suite families tested here.  They remain replay witnesses, not hardware-failure prevalence estimates.

\begin{table*}[t]
\centering
\scriptsize
\setlength{\tabcolsep}{4pt}
\caption{Object and Long full-substitution replay witnesses.  Object uses object-suite action statistics as the intended key; Long uses \textsc{long-v2}.  Wilson intervals are for the demonstration-level success rate on replay-valid demonstrations.}
\label{tab:object-long-replay}
\begin{tabular}{llrrrrl}
\toprule
Suite & Key & Replay $n$ & Drift mean & $F_{.2}$ & Success & Rate [95\% CI] \\
\midrule
Object & \textsc{goal-B} & 28 & 0.264 & 0.896 & 0/28 & 0.000 [0.000, 0.121] \\
Object & \textsc{spatial-v2} & 28 & 0.158 & 0.170 & 0/28 & 0.000 [0.000, 0.121] \\
Object & \textsc{long-v2} & 28 & 0.179 & 0.269 & 0/28 & 0.000 [0.000, 0.121] \\
Object & \textsc{mixed-all} & 28 & 0.147 & 0.103 & 0/28 & 0.000 [0.000, 0.121] \\
Long & \textsc{goal-B} & 23 & 0.197 & 0.418 & 0/23 & 0.000 [0.000, 0.143] \\
Long & \textsc{spatial-v2} & 23 & 0.159 & 0.132 & 0/23 & 0.000 [0.000, 0.143] \\
Long & \textsc{mixed-all} & 23 & 0.136 & 0.025 & 1/23 & 0.043 [0.008, 0.210] \\
\bottomrule
\end{tabular}
\end{table*}

\subsection{Replay dose response}

Table~\ref{tab:dose-response} evaluates the interpolation in Eq.~\eqref{eq:alpha-interp} on both suites.  At \(\alpha=0\), all replay-valid demonstrations remain successful.  As metadata moves away from the intended key, success drops sharply.  For LIBERO-Goal with \textsc{long-v2}, success falls from 28/28 at \(\alpha=0\) to 17/28 at \(\alpha=0.25\), 7/28 at \(\alpha=0.50\), 0/28 at \(\alpha=0.75\), and 2/28 at \(\alpha=1.00\).  For LIBERO-Spatial with the same substituted key, success falls from 26/26 to 9/26, 1/26, 0/26, and 0/26.  The interpolation is monotone in metadata space, but task success is a thresholded dynamical predicate; individual trajectories can fail and recover across interpolation values depending on contact geometry and controller response.  The trend is dose-dependent but not claimed to be strictly monotone.

\begin{table*}[t]
\centering
\scriptsize
\setlength{\tabcolsep}{3pt}
\caption{Replay success under metadata interpolation.  Normalized trajectories and initial states are fixed; only action unnormalization metadata changes.  Endpoint drift columns summarize replay trajectories at \(\alpha=1\).}
\label{tab:dose-response}
\begin{tabular}{llrrrrrrrr}
\toprule
Suite & Key & $\alpha=0$ & $0.25$ & $0.50$ & $0.75$ & $1.00$ & Drift@1 & p95@1 & $F_{.2}$@1 \\
\midrule
Goal & \textsc{spatial-v2} & 28/28 & 22/28 & 18/28 & 13/28 & 8/28 & 0.151 & 0.191 & 0.083 \\
Goal & \textsc{long-v2}   & 28/28 & 17/28 &  7/28 &  0/28 & 2/28 & 0.202 & 0.275 & 0.472 \\
Goal & \textsc{mixed-all} & 28/28 & 20/28 & 15/28 &  9/28 & 5/28 & 0.132 & 0.171 & 0.004 \\
\midrule
Spatial & \textsc{goal-B}   & 26/26 & 20/26 & 14/26 &  7/26 & 4/26 & 0.179 & 0.221 & 0.146 \\
Spatial & \textsc{long-v2}  & 26/26 &  9/26 &  1/26 &  0/26 & 0/26 & 0.173 & 0.239 & 0.345 \\
Spatial & \textsc{mixed-all}& 26/26 & 14/26 &  7/26 &  4/26 & 3/26 & 0.114 & 0.157 & 0.016 \\
\bottomrule
\end{tabular}
\end{table*}

The dose response gives a direct witness for executable non-equivalence.  The same normalized action trajectories that succeed under the intended metadata fail under plausible sibling metadata.  We report success counts rather than only percentages because the replay-valid sets are small.  No image perturbation, prompt edit, model weight change, or policy sampling change is involved.

\subsection{Four-suite replay summary}

Across Goal, Spatial, Object, and Long, metadata substitution changes the decoded action law while the normalized trajectory is fixed, and replay-valid trajectories frequently leave the success basin.  Goal and Spatial provide the dose-response and attribution setting.  Object and Long provide additional full-substitution witnesses across the remaining LIBERO suite families.  The endpoint effect is established; the next step is failure attribution.  The next section decomposes the certificate and rules out replay artifacts, no-op filtering, and gripper-convention mismatch.

\section{Extended Evidence and Failure Attribution}
\label{sec:extended-evidence}

The replay tables above establish the endpoint phenomenon: under fixed normalized trajectories and fixed initial states, plausible sibling metadata keys can change physical replay outcomes.  This section gives the failure-attribution chain.  It separates replay validity, certificate decomposition, and alternative explanations such as no-op filtering or gripper convention.

The checks below give a consistent account.  Correct-decoded replay is used to define replay-valid demonstrations; the certificate decomposes into non-gripper action displacement under shared masks; larger drift is associated with higher failure rates within the protocol; and identity-key, no-op, raw-vs-correct, and synthetic controls behave as expected.  These checks do not turn ExecSpec into a task-success predictor.  They support the narrower claim that metadata substitution changes the executable action semantics before rollout, and that this semantic shift can be sufficient to leave the success basin.

\subsection{Replay-valid evaluation set}

The first attribution issue is replay validity.  For each suite, we start from recorded demonstrations and construct normalized trajectories under the intended metadata key.  These normalized trajectories are then decoded twice: once under the intended key and once under a substituted key.  A demonstration enters the replay-valid set only when the intended, correct-decoded trajectory succeeds from the recorded initial state.  This filtering does not favor the metadata substitution.  It removes demonstrations for which simulator replay itself is already unreliable, so that subsequent failures are attributable to changing the executable action semantics.

LIBERO-Goal uses 30 demonstrations before replay filtering.  Correct-decoded replay succeeds on 28 of them.  LIBERO-Spatial also starts from 30 demonstrations and has 26 correct-decoded replay-valid demonstrations.  LIBERO-Object has 28 replay-valid demonstrations out of 30.  LIBERO-Long has 23 replay-valid demonstrations out of 30.  All success rates in the metadata-substitution tables are reported on these replay-valid sets.

\subsection{Metadata keys and interventions}

The plausible substitutions are sibling metadata records from the same VLA/LIBERO workflow.  For LIBERO-Goal, the intended key is \textsc{goal-B}; the substituted keys are \textsc{spatial-v2}, \textsc{long-v2}, and \textsc{mixed-all}.  For LIBERO-Spatial, the intended key is \textsc{spatial-v2}; the substituted keys are \textsc{goal-B}, \textsc{long-v2}, and \textsc{mixed-all}.  The substitutions are therefore not random corruptions.  They model a deployment error in which a nearby dataset, suite, or mixed-statistics record is selected as the action unnormalizer.  We choose them by this workflow-level criterion, not by replay failure.

The interpolation experiment changes only the unnormalization statistics:
\begin{equation}
\begin{aligned}
q_{01}^{(\alpha)}&=(1-\alpha)q_{01}^{(c)}+\alpha q_{01}^{(w)},\\
q_{99}^{(\alpha)}&=(1-\alpha)q_{99}^{(c)}+\alpha q_{99}^{(w)}.
\end{aligned}
\end{equation}
The normalized trajectory is fixed for all \(\alpha\).  This makes the dose-response experiment a controlled intervention on the executable specification, not a new policy rollout or an adversarial optimization procedure.

\subsection{Calibration actions and thresholds}

The certificate is computed on calibration frames obtained by normalizing demonstration actions under the intended key and comparing alternative decoders.  LIBERO-Goal has 797 calibration frames after the protocol's frame sampling.  LIBERO-Spatial has 756 calibration frames.  The main threshold \(\tau=0.20\) is used only as a reporting scale in the normalized non-gripper action coordinates.  It is not a universal safety threshold or a deployment decision rule.

The verifier reports mean L2 displacement, quantiles of L2 displacement, threshold fractions, axis-wise absolute displacement, mask mismatch, and gripper displacement.  Replay then tests whether the displacement changes task outcome.  The certificate and replay play different roles: the certificate identifies executable non-equivalence before rollout, while replay supplies physical witnesses that the change can leave the task success basin.

\subsection{Certificate decomposition}
\label{sec:expanded-certificates}

The second attribution issue is whether the certificate measures the same semantic change that drives replay failure.  The LIBERO-Goal certificate decomposes the compact main-text summary into quantiles, threshold fractions, and action-axis attribution.  The strongest plausible substitution, \textsc{long-v2}, has mean drift 0.199, p95 drift 0.275, p99 drift 0.318, and 43.5\% of calibration frames above drift 0.20.  The top non-gripper dimensions by mean absolute drift are dimensions 2, 5, and 0.  The gripper dimension has zero mean drift.

\begin{table*}[t]
\centering
\scriptsize
\setlength{\tabcolsep}{3pt}
\caption{Expanded LIBERO-Goal metadata certificate.  Drift is over normalized non-gripper dimensions.  Axis columns report mean absolute displacement per action dimension; dimension 6 is the gripper dimension and has zero drift for all plausible substitutions.}
\label{tab:app-goal-certificate}
\begin{tabular}{lrrrrrrrrrrrrrrrr}
\toprule
Key & $n$ & Mean & p50 & p90 & p95 & p99 & $F_{.1}$ & $F_{.2}$ & $F_{.3}$ & d0 & d1 & d2 & d3 & d4 & d5 & d6 \\
\midrule
\textsc{spatial-v2} & 797 & 0.152 & 0.145 & 0.191 & 0.233 & 0.252 & 0.996 & 0.089 & 0.000 & 0.067 & 0.037 & 0.001 & 0.023 & 0.060 & 0.107 & 0.000 \\
\textsc{long-v2} & 797 & 0.199 & 0.191 & 0.252 & 0.275 & 0.318 & 1.000 & 0.435 & 0.026 & 0.084 & 0.044 & 0.110 & 0.010 & 0.038 & 0.110 & 0.000 \\
\textsc{mixed-all} & 797 & 0.132 & 0.130 & 0.160 & 0.171 & 0.198 & 0.911 & 0.003 & 0.000 & 0.085 & 0.044 & 0.000 & 0.014 & 0.030 & 0.076 & 0.000 \\
\bottomrule
\end{tabular}
\end{table*}

LIBERO-Spatial shows the same certificate structure.  The \textsc{long-v2} substitution again produces non-gripper displacement with zero gripper drift, and it drives replay success to 0/26 at full substitution in the main-text table.

\begin{table*}[t]
\centering
\scriptsize
\setlength{\tabcolsep}{3pt}
\caption{Expanded LIBERO-Spatial metadata certificate.  Drift is over normalized non-gripper dimensions.  Axis columns report mean absolute displacement per action dimension; dimension 6 is the gripper dimension and has zero drift for all plausible substitutions.}
\label{tab:app-spatial-certificate}
\begin{tabular}{lrrrrrrrrrrrrrrrr}
\toprule
Key & $n$ & Mean & p50 & p90 & p95 & p99 & $F_{.1}$ & $F_{.2}$ & $F_{.3}$ & d0 & d1 & d2 & d3 & d4 & d5 & d6 \\
\midrule
\textsc{goal-B} & 756 & 0.178 & 0.179 & 0.206 & 0.223 & 0.263 & 0.997 & 0.135 & 0.000 & 0.063 & 0.033 & 0.001 & 0.034 & 0.063 & 0.139 & 0.000 \\
\textsc{long-v2} & 756 & 0.172 & 0.170 & 0.229 & 0.241 & 0.276 & 0.956 & 0.312 & 0.001 & 0.062 & 0.070 & 0.111 & 0.023 & 0.019 & 0.053 & 0.000 \\
\textsc{mixed-all} & 756 & 0.113 & 0.110 & 0.146 & 0.160 & 0.209 & 0.640 & 0.017 & 0.000 & 0.017 & 0.067 & 0.001 & 0.013 & 0.028 & 0.074 & 0.000 \\
\bottomrule
\end{tabular}
\end{table*}

\subsection{Object and Long certificates and replay witnesses}
\label{sec:object-long-certificates}

The main text reports Goal and Spatial in detail.  We also run the same static verifier and full-substitution replay protocol on LIBERO-Object and LIBERO-Long.  Object uses object-suite action statistics as the intended metadata; because no separate released ECoT Object checkpoint statistics are used here, this row is a suite-level executable-specification test rather than a checkpoint-specific claim.  Long uses \textsc{long-v2}.  Table~\ref{tab:app-object-long-certificate} reports the static certificates.  Table~\ref{tab:app-object-long-replay} reports replay success on replay-valid demonstrations.  These rows extend the endpoint witness across all four LIBERO suite families tested here while preserving the same claim boundary: the result is replay evidence under fixed normalized trajectories, not a hardware failure-rate estimate.

\begin{table*}[t]
\centering
\scriptsize
\setlength{\tabcolsep}{3pt}
\caption{Object and Long metadata certificates.  Drift is over normalized non-gripper dimensions.  All rows have matching masks and zero gripper-dimension drift.}
\label{tab:app-object-long-certificate}
\begin{tabular}{llrrrrrrl}
\toprule
Suite & Key & $n$ & Mean & p95 & p99 & $F_{.2}$ & $F_{.3}$ & Top axes \\
\midrule
Object & \textsc{goal-B} & 913 & 0.264 & 0.372 & 0.423 & 0.907 & 0.210 & $5$:.164,$0$:.157,$1$:.101 \\
Object & \textsc{spatial-v2} & 913 & 0.158 & 0.230 & 0.257 & 0.168 & 0.000 & $1$:.114,$0$:.083,$4$:.041 \\
Object & \textsc{long-v2} & 913 & 0.178 & 0.252 & 0.302 & 0.265 & 0.013 & $2$:.100,$5$:.089,$0$:.084 \\
Object & \textsc{mixed-all} & 913 & 0.147 & 0.225 & 0.287 & 0.100 & 0.005 & $5$:.107,$0$:.064,$1$:.057 \\
Long & \textsc{goal-B} & 1697 & 0.196 & 0.271 & 0.308 & 0.413 & 0.012 & $2$:.112,$5$:.100,$0$:.088 \\
Long & \textsc{spatial-v2} & 1697 & 0.159 & 0.217 & 0.245 & 0.115 & 0.000 & $2$:.114,$1$:.070,$0$:.046 \\
Long & \textsc{mixed-all} & 1697 & 0.136 & 0.190 & 0.208 & 0.019 & 0.000 & $2$:.112,$0$:.049,$5$:.030 \\
\bottomrule
\end{tabular}
\end{table*}

\begin{table*}[t]
\centering
\scriptsize
\setlength{\tabcolsep}{4pt}
\caption{Object and Long full-substitution replay witnesses.  Wilson intervals are for the demonstration-level success rate on replay-valid demonstrations.}
\label{tab:app-object-long-replay}
\begin{tabular}{llrrrrl}
\toprule
Suite & Key & Replay $n$ & Drift mean & $F_{.2}$ & Success & Rate [95\% CI] \\
\midrule
Object & \textsc{goal-B} & 28 & 0.264 & 0.896 & 0/28 & 0.000 [0.000, 0.121] \\
Object & \textsc{spatial-v2} & 28 & 0.158 & 0.170 & 0/28 & 0.000 [0.000, 0.121] \\
Object & \textsc{long-v2} & 28 & 0.179 & 0.269 & 0/28 & 0.000 [0.000, 0.121] \\
Object & \textsc{mixed-all} & 28 & 0.147 & 0.103 & 0/28 & 0.000 [0.000, 0.121] \\
Long & \textsc{goal-B} & 23 & 0.197 & 0.418 & 0/23 & 0.000 [0.000, 0.143] \\
Long & \textsc{spatial-v2} & 23 & 0.159 & 0.132 & 0/23 & 0.000 [0.000, 0.143] \\
Long & \textsc{mixed-all} & 23 & 0.136 & 0.025 & 1/23 & 0.043 [0.008, 0.210] \\
\bottomrule
\end{tabular}
\end{table*}

\subsection{OpenVLA-style unnormalizer interface sanity}
\label{sec:openvla-interface-sanity}

The replay experiments use demonstration-derived normalized trajectories to isolate executable semantics.  A separate interface-rule sanity check verifies that the same metadata-selected unnormalization rule applies at an OpenVLA-style action interface.  This is not a learned-model forward pass; it checks the controller-facing decoding rule that a model-emitted normalized action would use.  For an OpenVLA-style action normalizer, the controller-facing action is computed as
\begin{equation}
  a_i=\begin{cases}
  z_i(q_{99,i}-q_{01,i})+q_{01,i}, & m_i=1,\\
  z_i, & m_i=0.
  \end{cases}
\end{equation}
Using the same normalized action vectors and swapping only the metadata key changes the decoded action by the ExecSpec affine transform to float precision.  Table~\ref{tab:app-interface-sanity} reports the resulting displacement under three candidate keys.  This check does not evaluate model accuracy; it verifies that model-emitted normalized actions and demonstration-derived normalized actions share the same metadata-selected executable interface.

\begin{table}[t]
\centering
\small
\caption{OpenVLA-style metadata-selected unnormalizer interface sanity.  The same normalized actions are decoded under different metadata keys.  Formula error is the maximum absolute difference between direct decoding and the closed-form ExecSpec displacement.}
\label{tab:app-interface-sanity}
\begin{tabular}{lrrr}
\toprule
Candidate key & Mean L2 & Formula error & Mask mismatch \\
\midrule
\textsc{spatial-v2} & 0.185 & $5.96\times 10^{-8}$ & no \\
\textsc{long-v2} & 0.212 & $2.98\times 10^{-8}$ & no \\
\textsc{mixed-all} & 0.137 & $2.98\times 10^{-8}$ & no \\
\bottomrule
\end{tabular}
\end{table}

\subsection{Drift--failure association}
\label{sec:replay-details}

The third attribution issue is whether certificate magnitude is connected to replay outcome within the controlled protocol.  The main text reports endpoint success and the interpolation dose response.  Here we unpack the association between drift and failure on LIBERO-Goal.  Across 336 replay-valid intervention points, the correlation between mean drift and binary failure is 0.472.  The correlation is 0.495 for \textsc{long-v2}, 0.438 for \textsc{spatial-v2}, and 0.413 for \textsc{mixed-all}.

\begin{table}[t]
\centering
\small
\caption{LIBERO-Goal failure rate by drift bin across replay-valid intervention points.  The certificate is not a classifier, but higher drift bins correspond to much higher replay failure rates in this protocol.}
\label{tab:app-drift-bins}
\begin{tabular}{lrr}
\toprule
Drift bin & $n$ & Failure rate \\
\midrule
$[0,0.05)$ & 70 & 0.286 \\
$[0.05,0.10)$ & 103 & 0.485 \\
$[0.10,0.15)$ & 112 & 0.732 \\
$[0.15,0.20)$ & 37 & 0.946 \\
$[0.20,0.25)$ & 14 & 0.929 \\
\bottomrule
\end{tabular}
\end{table}

This table should not be read as a universal calibration curve.  It is a within-protocol diagnostic: once the normalized trajectory and initial state are fixed, larger action-space displacement is associated with a higher probability that the replay leaves the success basin.  The result supports using ExecSpec as a pre-rollout interface screen, not as a replacement for task-specific rollout evaluation.

\subsection{Alternative explanations and upper-bound controls}
\label{sec:alternative-controls}

We finally test simpler explanations for the replay failures.  The identity-key control substitutes the intended metadata key with itself.  It has zero drift at every interpolation value and preserves success on all 28 LIBERO-Goal replay-valid demonstrations.  This verifies that the replay pipeline does not create failures when the executable specification is unchanged.

Disabling no-op filtering leaves the main LIBERO-Goal endpoint success rates unchanged at full substitution: \textsc{spatial-v2} remains 8/28, \textsc{long-v2} remains 2/28, and \textsc{mixed-all} remains 5/28.  The metadata-mismatch result is therefore not an artifact of a no-op filtering rule.

Raw HDF5 replay succeeds on 27 of 30 LIBERO-Goal demonstrations.  Correct-decoded replay succeeds on 28 of 30, with one mismatch between the two replay modes.  The main analysis uses correct-decoded replay-valid demonstrations.  This choice isolates the intervention studied in the paper: changing the metadata used to decode a fixed normalized trajectory.

The gripper convention also does not explain the failures.  All plausible substitutions have the same normalization mask and zero gripper-dimension drift.  On LIBERO-Goal, the dominant \textsc{long-v2} drift dimensions are non-gripper dimensions 2, 5, and 0, with mean absolute displacements 0.110, 0.110, and 0.084.  On LIBERO-Spatial, \textsc{long-v2} again has zero gripper drift, with dominant non-gripper dimensions 2, 1, and 0.  The replay failures are therefore caused by hidden action-space semantics, not a gripper sign flip.

Synthetic metadata corruptions are not used as the deployment claim.  They calibrate the upper end of the certificate.  A permuted metadata record has mean drift 0.305, p95 drift 0.597, and 1/28 success at \(\alpha=1\).  A deterministic shift-scale corruption has mean drift 0.607, p95 drift 0.879, and 0/28 success.  These controls show that the verifier responds to large action-space distortions as expected, while the paper's main claim remains the plausible-key result.

\begin{table}[t]
\centering
\small
\caption{Synthetic upper-bound controls on LIBERO-Goal.  These rows are controls, not the deployment claim.}
\label{tab:app-synthetic}
\begin{tabular}{lrrrr}
\toprule
Control & Mean & p95 & $F_{.2}$ & Success \\
\midrule
Synthetic permuted & 0.305 & 0.597 & 0.644 & 1/28 \\
Synthetic shift-scale & 0.607 & 0.879 & 1.000 & 0/28 \\
\bottomrule
\end{tabular}
\end{table}

\paragraph{Reproducibility note.}
All reported tables use the corrected LIBERO task ordering specified by the replay protocol.  Runs with inconsistent task ordering are excluded from the manuscript tables.  The verifier is deterministic given the intended metadata record, candidate metadata record, calibration actions, replay-valid demonstrations, and interpolation grid \(\alpha\in\{0,0.25,0.5,0.75,1.0\}\).  The replay witness depends on simulator state reset and demonstration initial states, which is why endpoint success rates are reported on replay-valid demonstrations rather than on the full unfiltered set.

\section{Deployment Implications and Limitations}
\label{sec:discussion}

\paragraph{From certificate to manifest.}
The attribution analysis changes the deployment lesson.  The relevant safety object is not the checkpoint alone, but the executable policy: weights plus the controller-facing semantics that turn model outputs into robot actions.  A deployment that changes the action unnormalization key is therefore not a minor configuration variant of the same policy.  It is a different executable policy, even if the prompt, benchmark suite, normalized action sequence, and neural weights are unchanged.

This motivates a manifest discipline for VLA deployment.  At minimum, the manifest should record the checkpoint hash, action unnormalization key, statistics-record hash, normalization mask, action dimensionality, gripper convention, absolute or relative action convention, control frequency, and action hold duration.  For tokenized or chunked policies, it should also record the chunk-execution convention, image preprocessing, camera orientation, proprioceptive fields, and token-to-action decoding convention.  ExecSpec certifies one field in this manifest: quantile-style action unnormalization metadata.  The general rule is broader: if two deployments differ in a controller-facing semantic field, they should not be treated as the same robot policy merely because their model weights match.

\paragraph{What ExecSpec certifies.}
ExecSpec is a pre-rollout interface check, not a proof of task safety.  A large certificate identifies a metadata substitution that changes physical action semantics before the robot moves.  A small certificate only says that this particular interface mismatch was not detected at the chosen scale; it does not certify success, robustness, or safe contact behavior.  Perception, dynamics, recovery, controller saturation, and the learned policy itself remain outside the certificate.  The useful operational interpretation is asymmetric: failing the check is strong evidence that the deployment is not the intended executable policy, while passing it only clears one necessary deployment condition.

\paragraph{Why replay is the right isolation test.}
The experiments use replayed normalized trajectories rather than learned-policy closed-loop inference.  This is deliberate.  Closed-loop VLA failures can be caused by visual preprocessing, prompt format, sampling, model error, compounding drift, simulator mismatch, or recovery behavior.  Replay holds those factors fixed and isolates the executable-semantics question.  If an expert trajectory succeeds under the intended metadata and fails under substituted metadata, then action-space semantics alone are sufficient to flip the physical outcome for that trajectory.  This does not replace policy rollout evaluation; it identifies a lower-level safety condition that should be checked before policy quality is evaluated.

\paragraph{Scope of the empirical claim.}
The empirical claim is limited to LIBERO-Goal, LIBERO-Spatial, LIBERO-Object, and LIBERO-Long replay under the metadata records tested here.  The paper does not claim that every VLA checkpoint is vulnerable to every metadata substitution, that normalization mismatch is the dominant failure mode in robot learning, or that the measured replay failures predict hardware failure rates.  The supported claim is more specific: plausible sibling metadata keys in a VLA/LIBERO workflow induce measurable action-space transforms, and those transforms can be sufficient to flip replayed task success while normalized actions and initial states are fixed.  The formal result applies to quantile-style unnormalizers in general, but physical consequence depends on task margins, controller conventions, and dynamics.

\paragraph{Thresholds are reporting scales, not safety margins.}
The threshold \(\tau=0.20\) used in the tables is a reporting scale, not a universal robot safety margin.  Different embodiments, controllers, and action units require different scales.  The more informative quantities are the full drift distribution, axis attribution, replay outcome, and dose-response curve.  In our experiments, larger certificates align with higher replay failure rates, but the certificate is not a perfect predictor.  Future work should calibrate certificates against task-specific margins, contact-sensitive safety metrics, and hardware-specific action units.

\paragraph{Metadata beyond normalization.}
This paper focuses on action unnormalization because it gives a closed-form certificate and clean physical witnesses.  The executable-policy view is broader.  Image resizing, camera orientation, tokenizer special tokens, gripper sign conventions, control frequency, and action chunk execution can also be hidden parts of the policy.  Some of these components may admit similar static certificates; others may require simulator or hardware checks.  A complete VLA deployment manifest should cover controller-facing semantics beyond normalization statistics.

\paragraph{Hardware deployment.}
The experiments use LIBERO simulation replay.  This is appropriate for isolating action semantics, but it is not a substitute for real-robot deployment testing.  On hardware, metadata mismatch could interact with controller saturation, compliance, calibration error, emergency stops, and contact-rich recovery.  These interactions may reduce or amplify the observed effect.  The safety recommendation is therefore conservative: detect and block executable-specification mismatch before either simulation rollout or hardware execution.

\section{Conclusion}
\label{sec:conclusion}

A robot policy is not fully specified by its neural weights.  For VLA deployment, action-space metadata can determine the physical action law that reaches the controller.  ExecSpec makes this hidden interface explicit: for quantile-style normalizers, metadata mismatch induces a closed-form action displacement that can be checked before rollout.  The LIBERO replay witnesses show why this interface matters, while the limitations above bound the claim to replayed trajectories and metadata records tested here.

\end{document}